\providecommand{\U}[1]{\protect\rule{.1in}{.1in}}
\newtheorem{theorem}{Theorem}
\newtheorem{corollary}[theorem]{Corollary}
\newtheorem{lemma}[theorem]{Lemma}
\newenvironment{proof}[1][Proof]{\noindent\textbf{#1.} }{\ \rule{0.5em}{0.5em}}
\begin{document}

\title{Quantum Lower Bound for Approximate Counting via Laurent Polynomials}
\author{Scott Aaronson\thanks{University of Texas at Austin. \ Email:
aaronson@cs.utexas.edu. \ Supported by a Vannevar Bush Fellowship from the US
Department of Defense, a Simons Investigator Award, and the Simons
\textquotedblleft It from Qubit\textquotedblright\ collaboration.}}
\date{}
\maketitle

\begin{abstract}
We consider the following problem: estimate the size of a nonempty set
$S\subseteq\left[  N\right]  $, given both quantum queries to a membership
oracle for $S$, and a device that generates equal superpositions $\left\vert
S\right\rangle $\ over $S$ elements. \ We show that, if $\left\vert
S\right\vert $\ is neither too large nor too small, then approximate counting
with these resources is still quantumly hard. \ More precisely, any quantum
algorithm needs either $\Omega\left(  \sqrt{N/\left\vert S\right\vert
}\right)  $\ queries or else $\Omega\left(  \min\left\{  \left\vert
S\right\vert ^{1/4},\sqrt{N/\left\vert S\right\vert }\right\}  \right)
$\ copies of $\left\vert S\right\rangle $. \ This means that, in the black-box
setting, quantum sampling does \textit{not} imply approximate counting. \ The
proof uses a novel generalization of the polynomial method of Beals et al. to
Laurent polynomials, which can have negative exponents.

\end{abstract}

\section{Introduction\label{INTRO}}

The \textit{quantum query complexity of approximate counting} was one of the
first topics studied in quantum algorithms. \ Given a nonempty finite set
$S\subseteq\left[  N\right]  $ (here and throughout, $\left[  N\right]
=\left\{  1,\ldots,N\right\}  $), suppose we want to estimate its cardinality,
$\left\vert S\right\vert $, to within some multiplicative accuracy
$\varepsilon$. \ This is a fundamental task in theoretical computer science,
used as a subroutine for countless other tasks.

As is standard in quantum algorithms, we work in the so-called
\textit{black-box model} (see \cite{bw}), where we assume only that we're
given a membership oracle for $S$: an oracle that, for any $i\in\left[
N\right]  $, tells us whether $i\in S$. \ We can, however, query the oracle in
quantum superposition. \ How many queries must a quantum computer make, as a
function of both $N$\ and $\left\vert S\right\vert $, to solve this problem
with high probability?

For classical randomized algorithms, one can show that $\Theta\left(  \frac
{N}{\left\vert S\right\vert }\right)  $\ membership queries are necessary and
sufficient, for approximate counting to within some constant accuracy
$\varepsilon>0$. \ Moreover, \textit{any} accuracy $\varepsilon$\ is
achievable at the cost of a\ $O\left(  \frac{1}{\varepsilon^{2}}\right)
$\ multiplicative overhead. \ Intuitively, in the worst case, we might need
$\Theta\left(  \frac{N}{\left\vert S\right\vert }\right)  $\ queries just to
find \textit{any} $S$\ elements, but once we do, estimating their frequency is
just a standard statistics problem. \ Furthermore, for the $O\left(  \frac
{N}{\left\vert S\right\vert }\right)  $ estimation strategy to work, we don't
need to suppose (circularly) that\ $\left\vert S\right\vert $\ is
approximately known in advance, but can decide when to halt dynamically,
depending on when the first $S$\ elements are found.

In the quantum case, Brassard, H\o yer, and Tapp \cite{bht:count} gave an
algorithm for approximate counting that makes only $O\left(  \sqrt{\frac
{N}{\left\vert S\right\vert }}\right)  $\ queries, for any constant
$\varepsilon>0$. \ Moreover, they showed how to achieve any
accuracy\ $\varepsilon$\ with $O\left(  \frac{1}{\varepsilon}\right)
$\ multiplicative overhead. \ To do so, one uses amplitude amplification, the
basic primitive of Grover's search algorithm \cite{grover}. \ The original
algorithm of Brassard et al. \cite{bht:count} also used quantum phase
estimation, in effect \textit{combining} Grover's algorithm with Shor's
period-finding algorithm. \ However, it's a folklore fact that one can remove
the phase estimation, and adapt Grover search with an unknown number of marked
items, to get an approximate count of the number of marked items as well.

On the lower bound side, it follows immediately from the optimality of
Grover's algorithm (i.e., the BBBV Theorem \cite{bbbv}) that even with a
quantum computer, in the black-box setting, $\Omega\left(  \sqrt{\frac
{N}{\left\vert S\right\vert }}\right)  $\ queries are \textit{needed} for
approximate counting to any constant accuracy.\bigskip

In practice, when trying to estimate the size of a set $S\subseteq\left[
N\right]  $, often we can do more than make membership queries to $S$. \ At
the least, often we can efficiently generate nearly-uniform \textit{samples}
from $S$, for instance by using Markov Chain Monte Carlo techniques. \ To give
two examples, if $S$ is the set of perfect matchings in a bipartite graph, or
the set of grid points in a high-dimensional convex body, then we can
efficiently sample $S$ using the seminal algorithms of Jerrum, Sinclair, and
Vigoda \cite{jsv} or of Dyer, Frieze, and Kannan \cite{dfk},\ respectively.

Sometimes we can even \textquotedblleft QSample\textquotedblright\ $S$---a
term coined in 2003 by Aharonov and Ta-Shma \cite{at}, and which simply means
that we can approximately prepare the uniform superposition%
\[
\left\vert S\right\rangle :=\frac{1}{\sqrt{\left\vert S\right\vert }}%
\sum_{i\in S}\left\vert i\right\rangle
\]
via a polynomial-time quantum algorithm (where \textquotedblleft
polynomial\textquotedblright\ here means $\left(  \log N\right)  ^{O\left(
1\right)  }$). \ Because we need to uncompute any garbage, the ability to
prepare $\left\vert S\right\rangle $ as a coherent superposition is a more
stringent requirement than the ability to sample $S$. \ Indeed, as Aharonov
and Ta-Shma \cite{at}\ pointed out, the quantum lower bound for finding
collisions \cite{aar:col,as} has the corollary that, in the black-box setting,
there are classes of sets $S$\ that can be efficiently sampled but
\textit{not} efficiently QSampled.

On the other hand, Aharonov and Ta-Shma \cite{at},\ and Grover and Rudolph
\cite{groverrudolph}, observed that many interesting sets $S$\ can be QSampled
as well. \ In particular, this holds for all sets $S$\ such that we can
approximately count not only $S$ itself, but also the restrictions of $S$
obtained by fixing bits of its elements. \ Or, what's known to be equivalent
\cite{sinclairjerrum}, it holds for all sets $S$ such that we can efficiently
sample not only the uniform distribution over $S$ elements, but also the
conditional distributions obtained by fixing bits. \ So in particular, the set
of perfect matchings in a bipartite graph, and the set of grid points in a
convex body, can both be efficiently QSampled. \ There are other sets that can
be QSampled but not because of this reduction. \ A simple example would be a
set $S$\ such that $\left\vert S\right\vert \geq\frac{N}%
{\operatorname*{polylog}N}$: in that case we can efficiently prepare
$\left\vert S\right\rangle $\ using postselection, but approximately counting
$S$'s restrictions might be hard.\bigskip

Recently Paul Burchard (personal communication) posed the following question
to us: are there any sets that can be QSampled even though they \textit{can't}
be approximately counted? \ Or, better: do QSampling and membership testing,
\textit{together}, imply approximate counting? \ I.e., if we have
polynomial-time quantum algorithms to prepare the state $\left\vert
S\right\rangle $, and also to decide membership in $S$, is that enough to let
us approximately count $S$?

Our main result is that, in the black-box setting, the answer to this question
is no. \ More precisely, we show that any quantum algorithm to decide whether
$\left\vert S\right\vert =w$\ or $\left\vert S\right\vert =2w$, promised that
one of those is the case, must either make $\Omega\left(  \sqrt{\frac{N}{w}%
}\right)  $\ membership queries to $S$, or else use $\Omega\left(
\min\left\{  w^{1/4},\sqrt{\frac{N}{w}}\right\}  \right)  $\ copies of
$\left\vert S\right\rangle $. \ So if (for example) we set $w:=N^{2/3}$, then
any quantum algorithm must either query $S$\ or prepare the state $\left\vert
S\right\rangle $\ at least $\Omega\left(  N^{1/6}\right)  $\ times. \ This
means that there's at most a quadratic speedup compared to classical
approximate counting.

In Section \ref{IMPROVE}, we discuss the prospects for improving this lower
bound. \ We conjecture that the bound could be improved, at least to involve
$\Omega\left(  w^{1/3}\right)  $ rather than $\Omega\left(  w^{1/4}\right)  $,
by using stronger results from approximation theory; indeed, a user on
MathOverflow already proved one of the requisite results about polynomial
degree. \ However, we also observe that our lower bound \textit{cannot} be
improved to involve $\omega\left(  w^{1/3}\right)  $\ without going beyond the
polynomial method. \ While we do not do that in this paper, we give a viable
approach:\ using a hybrid argument, inspired by recent work of Zhandry
\cite{zhandry:lightning}, we show that better lower bounds for our problem
would follow from better lower bounds purely on the number of copies of
$\left\vert S\right\rangle $ (ignoring the number of queries).

Our lower bounds are within a polynomial factor of the best known quantum
upper bounds for approximate counting. \ As mentioned before, Brassard et al.
\cite{bht:count} gave a quantum algorithm to solve the problem using $O\left(
\sqrt{\frac{N}{w}}\right)  $\ queries (and no copies of $\left\vert
S\right\rangle $). \ At the opposite extreme, it's easy to solve the problem
using $O\left(  \sqrt{w}\right)  $\ copies of $\left\vert S\right\rangle $
(and no queries), by simply measuring each copy of $\left\vert S\right\rangle
$\ in the computational basis and then searching for birthday collisions.
\ Alternatively, one can solve the problem using $O\left(  \frac{N}{w}\right)
$\ copies of $\left\vert S\right\rangle $ (and again, no queries), by
projecting each copy onto the state $\frac{1}{\sqrt{N}}\left(  \left\vert
1\right\rangle +\cdots+\left\vert N\right\rangle \right)  $\ and then counting
how many of the projections succeed. \ We're not aware of any quantum
algorithm for approximate counting that combines membership queries with
QSampling in an interesting way, though neither can we rule such an algorithm
out. \ Of course, our main result limits the advantage that any such algorithm
could achieve.

In our view, at least as interesting as our main result is the technique used
to achieve it. \ In 1998, Beals et al. \cite{bbcmw}\ famously observed that,
if a quantum algorithm $Q$\ makes $T$ queries to an input $X$, then $Q$'s
acceptance probability can be written as a real multilinear polynomial in the
bits of $X$, of degree at most $2T$. \ And thus, crucially, if we want to
\textit{rule out} a fast quantum algorithm to compute some function $f\left(
X\right)  $, then it suffices to show that any real polynomial $p$\ that
approximates $f$\ pointwise must have high degree. \ This general
transformation, from questions about quantum algorithms to questions about
polynomials, has been used to prove many results that were not known otherwise
at the time, including the quantum lower bound for the collision problem
\cite{aar:col,as}\ and the first direct product theorems for quantum search
\cite{aar:adv,ksw}.

In our case, the difficulty is that the quantum algorithm starts with many
copies of the state $\left\vert S\right\rangle $. \ As a consequence of
this---and specifically, of the $\frac{1}{\sqrt{\left\vert S\right\vert }}%
$\ normalizing factor in $\left\vert S\right\rangle $---when we write the
average acceptance probability of our algorithm as a function of $\left\vert
S\right\vert $, we find that we get a \textit{Laurent polynomial}: a
polynomial that can contain both positive and negative integer powers of
$\left\vert S\right\vert $. \ The degree of this polynomial (the highest power
of $\left\vert S\right\vert $) encodes the sum of the number of queries and
the number of copies of $\left\vert S\right\rangle $, while the
\textquotedblleft anti-degree\textquotedblright\ (the highest power of
$\frac{1}{\left\vert S\right\vert }$) encodes the number of copies of
$\left\vert S\right\rangle $. \ We're thus faced with the task of
lower-bounding the degree and the anti-degree of a Laurent polynomial that's
bounded at integer points and that encodes the approximate counting problem.

We address this using a switching argument that, as far as we know, is new in
quantum query complexity. \ Writing our Laurent polynomial as $q\left(
\left\vert S\right\vert \right)  =u\left(  \left\vert S\right\vert \right)
+v(\frac{1}{\left\vert S\right\vert })$, where $u$ and $v$ are ordinary
polynomials, we show that, if $u$ and $v$ both have low enough degree (namely,
$\deg\left(  u\right)  =o\left(  \sqrt{\frac{N}{w}}\right)  $ and $\deg\left(
v\right)  =o\left(  w^{1/4}\right)  $), then we get \textquotedblleft
unbounded growth\textquotedblright\ in their values. \ That is: for
approximation theory reasons, either $u$ or $v$ must attain large values, far
outside of $\left[  0,1\right]  $, at some integer values of $\left\vert
S\right\vert $. \ But that means that, for $q$ itself to be bounded in
$\left[  0,1\right]  $\ (and thus represent a probability), the other
polynomial must \textit{also} attain large values. \ And that, in turn, will
force the first polynomial to attain even larger values, and so on
forever---thereby proving that these polynomials could not have existed.

We observe that, if we considered the broader class of rational functions,
then there \textit{are} rational functions of low degree that represent
approximate counting. \ This follows, for example, from the connection between
rational functions and \textit{postselected} quantum algorithms
\cite{mahadevdewolf}, together with Aaronson's $\mathsf{PostBQP}=\mathsf{PP}%
$\ theorem \cite{aar:pp}. \ Thus, our proof relies on the fact that Laurent
polynomials are an extremely special kind of rational function.

\section{Result\label{RESULT}}

Define $x\in\left\{  0,1\right\}  ^{N}$\ , the \textquotedblleft
characteristic string\textquotedblright\ of the set $S\subseteq\left[
N\right]  $, by $x_{i}=1$\ if $i\in S$\ and $x_{i}=0$\ otherwise.

Our starting point is the well-known \textit{symmetrization lemma} of Minsky
and Papert \cite{mp} (see also Beals et al. \cite{bbcmw}\ for its application
to quantum query complexity), by which we can often reduce questions about
multivariate polynomials to questions about univariate ones.

\begin{lemma}
[Symmetrization Lemma, Minsky and Papert \cite{mp}]\label{symlem}Let
$p:\left\{  0,1\right\}  ^{N}\rightarrow\mathbb{R}$\ be a real multilinear
polynomial of degree $d$, and let%
\[
q\left(  k\right)  :=\operatorname{E}_{\left\vert X\right\vert =k}\left[
p\left(  X\right)  \right]  .
\]
Then $q$ can be written as a real polynomial in $k$ of degree at most $d$.
\end{lemma}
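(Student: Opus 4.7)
The plan is to reduce to individual monomials via linearity of expectation. Expanding $p$ in the multilinear basis gives
\[
p(X) = \sum_{T \subseteq [N],\, |T| \le d} c_T \prod_{i \in T} x_i,
\]
so that
\[
q(k) = \sum_{T} c_T \cdot \Pr_{|X| = k}\Bigl[\,x_i = 1 \text{ for all } i \in T\,\Bigr].
\]
It therefore suffices to show that for each fixed $T$ of size $s \le d$, the indicated probability is a polynomial in $k$ of degree $s$; summing then yields a polynomial of degree at most $d$.

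For a uniformly random $X$ of Hamming weight $k$, this probability equals the fraction of weight-$k$ subsets of $[N]$ that contain $T$, namely $\binom{N-s}{k-s}/\binom{N}{k}$. The key computation is to simplify this ratio to the falling-factorial form
\[
\frac{k(k-1)\cdots(k-s+1)}{N(N-1)\cdots(N-s+1)},
\]
which is manifestly a polynomial in $k$ of degree $s$, with a denominator depending only on the constants $N$ and $s$. Linearity of expectation then writes $q(k)$ as a linear combination of polynomials of degree at most $d$, hence itself as such a polynomial.

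The only mild subtlety is that the formula must hold for every integer $k \in \{0,1,\ldots,N\}$, including the degenerate range $k < s$ where the event has probability zero; but this is automatic, since the numerator $k(k-1)\cdots(k-s+1)$ vanishes exactly on $\{0,1,\ldots,s-1\}$. I do not anticipate a real obstacle: the lemma is essentially a bookkeeping exercise once the closed form for the expectation of a product of distinct variables under the weight-$k$ uniform distribution is identified, and the multilinearity hypothesis is what rules out higher-degree monomials like $x_i^2$ that would otherwise spoil the degree accounting.
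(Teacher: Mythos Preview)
Your argument is correct and is the standard proof of the symmetrization lemma: expand in the multilinear basis, compute the expectation of each monomial $\prod_{i\in T}x_i$ as the falling factorial $k(k-1)\cdots(k-|T|+1)/\bigl(N(N-1)\cdots(N-|T|+1)\bigr)$, and observe this is a degree-$|T|$ polynomial in $k$. The paper itself does not supply a proof of this lemma---it is simply quoted from Minsky and Papert and then applied---so there is nothing to compare against; your write-up fills in exactly the classical argument the citation points to.
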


By using Lemma \ref{symlem}, we now prove the key fact that relates quantum
algorithms, of the type we're considering, to real Laurent polynomials in one
variable. \ The following lemma generalizes the connection between quantum
algorithms and real polynomials established by Beals et al. \cite{bbcmw}.

\begin{lemma}
\label{laurentlem}Let $Q$ be a quantum algorithm that receives $R$\ copies of
$\left\vert S\right\rangle $ and makes $T$ queries to $\mathcal{O}_{S}$. \ Let%
\[
q\left(  k\right)  :=\operatorname{E}_{\left\vert S\right\vert =k}\left[
\Pr\left[  Q^{\mathcal{O}_{S}}\left(  \left\vert S\right\rangle ^{\otimes
R}\right)  \text{ accepts}\right]  \right]  .
\]
Then $q:\mathbb{R}\rightarrow\mathbb{R}$\ is a univariate Laurent polynomial,
with maximum exponent at most $2T+R$\ and minimum exponent at least $-R$.
\end{lemma}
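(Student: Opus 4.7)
My plan is to track the polynomial structure of the quantum amplitudes as a function of the bits $x_1,\ldots,x_N$ of the characteristic string of $S$, and then symmetrize over $|X|=k$. The skeleton is the Beals--Buhrman--Cleve--Mosca--de Wolf \cite{bbcmw} polynomial method, adapted to account for the nontrivial input state and its $k^{-R/2}$ normalization.

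First I would expand the input. By definition, the amplitude of a computational basis state $|i_1,\ldots,i_R\rangle$ in $|S\rangle^{\otimes R}$ equals $\frac{1}{k^{R/2}} x_{i_1}\cdots x_{i_R}$, which is a polynomial in $x$ of degree at most $R$, divided by the factor $k^{R/2}$. Adjoining workspace ancillas initialized to $|0\rangle$, every amplitude of the initial state has the form $\alpha_s^{(0)}(x)/k^{R/2}$ with $\deg \alpha_s^{(0)}\le R$.

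Next I would propagate this structure through the algorithm. Unitaries that do not depend on $x$ only take linear combinations of amplitudes and hence preserve polynomial degree. A single application of $\mathcal{O}_S$, which sends $|i,b\rangle \mapsto |i,b\oplus x_i\rangle$, replaces each pair of amplitudes $(\alpha_{i,0}(x),\alpha_{i,1}(x))$ by $\bigl((1-x_i)\alpha_{i,0}(x)+x_i\alpha_{i,1}(x),\; x_i\alpha_{i,0}(x)+(1-x_i)\alpha_{i,1}(x)\bigr)$, which raises the polynomial degree by at most $1$. So after $T$ queries every final amplitude has the form $\alpha_s(x)/k^{R/2}$ with $\deg \alpha_s \le R+T$. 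Summing $|\alpha_s(x)|^2/k^R$ over the accepting basis states, and using $x_i^2=x_i$ to multilinearize, the acceptance probability becomes $P(x)/k^R$ where $P$ is a real multilinear polynomial in $x$ of degree at most $2(R+T)$. (Complex conjugation is harmless here since the $x_i$ are real, so $|\alpha_s(x)|^2$ is a real polynomial of degree at most $2\deg\alpha_s$.)

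Finally I would apply the Minsky--Papert symmetrization lemma (Lemma \ref{symlem}) to $P$: this yields a univariate polynomial $\tilde P(k)$ of degree at most $2R+2T$ such that $q(k)=\tilde P(k)/k^R$. Reading off exponents in this Laurent form, the maximum power of $k$ is at most $\deg \tilde P - R \le 2T+R$, and the minimum is at least $-R$, which is exactly the claim. There is no real obstacle in the argument; the only thing one has to be careful about is tracking how the $k^{-R/2}$ normalization from $|S\rangle^{\otimes R}$ interacts with the BBCMW degree bookkeeping, so that the $k$-exponents on both the high and low side come out sharp.
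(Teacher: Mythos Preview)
Your proposal is correct and follows essentially the same route as the paper: expand $|S\rangle^{\otimes R}$ to see amplitudes are degree-$R$ polynomials in $x$ divided by $k^{R/2}$, track degree growth by one per query \`a la Beals et al., square to get a degree-$2(R+T)$ multilinear polynomial divided by $k^R$, and then apply Minsky--Papert symmetrization. The only cosmetic differences are that the paper uses the phase oracle (multiplication by $1-2x_i$) rather than the XOR oracle, and explicitly splits amplitudes into real and imaginary parts before squaring.
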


\begin{proof}
Let $\left\vert \psi_{t}\right\rangle $ be $Q$'s state immediately after the
$t^{th}$\ query. \ Then we can write $Q$'s initial state as%
\[
\left\vert \psi_{0}\right\rangle =\left\vert S\right\rangle ^{\otimes R}%
=\frac{1}{\left\vert S\right\vert ^{R/2}}\sum_{i_{1},\ldots,i_{R}\in\left[
N\right]  }x_{i_{1}}\cdots x_{i_{R}}\left\vert i_{1},\ldots,i_{R}\right\rangle
.
\]
Thus, each amplitude is a complex multilinear polynomial in $X=\left(
x_{1},\ldots,x_{N}\right)  $ of degree $R$, divided by $\left\vert
S\right\vert ^{R/2}$.

Like Beals et al. \cite{bbcmw}, we now consider how amplitudes change as $Q$
progresses. \ Each query, to an index $i\in\left[  N\right]  $, multiplies the
amplitude of the associated basis state by $1-2x_{i}$, increasing the
amplitude's degree as a polynomial by $1$. \ Meanwhile, between the $t^{th}%
$\ and $\left(  t+1\right)  ^{st}$\ queries, $Q$ can apply an arbitrary
unitary transformation $U_{t}$, which does not depend on $X$ and hence does
not increase degree. \ Since $x_{i}^{2}=x_{i}$\ for all $i$, we can also
maintain multilinearity without loss of generality.

It follows that $Q$'s final state has the form%
\[
U_{T}\left\vert \psi_{T}\right\rangle =\sum\alpha_{z}\left(  X\right)
\left\vert z\right\rangle ,
\]
where each $\alpha_{z}\left(  X\right)  $\ is a complex multilinear polynomial
in $X$ of degree at most $R+T$, again divided by $\left\vert S\right\vert
^{R/2}$. \ Since $X$ itself is real-valued, it follows that the real and
imaginary parts of $\alpha_{z}\left(  X\right)  $, considered individually,
are real multilinear polynomials in $X$\ of degree at most $R+T$\ divided by
$\left\vert S\right\vert ^{R/2}$.

Hence, if we let%
\[
p\left(  X\right)  :=\Pr\left[  Q^{\mathcal{O}_{S}}\left(  \left\vert
S\right\rangle ^{\otimes R}\right)  \text{ accepts}\right]  ,
\]
then%
\[
p\left(  X\right)  =\sum_{\text{accepting }z}\left\vert \alpha_{z}\left(
X\right)  \right\vert ^{2}=\sum_{\text{accepting }z}\left(  \operatorname{Re}%
^{2}\alpha_{z}\left(  X\right)  +\operatorname{Im}^{2}\alpha_{z}\left(
X\right)  \right)
\]
is a real multilinear polynomial in $X$ of degree at most $2\left(
R+T\right)  $, divided through (in every monomial) by $\left\vert S\right\vert
^{R}=\left\vert X\right\vert ^{R}$.

Now consider%
\[
q\left(  k\right)  :=\operatorname{E}_{\left\vert X\right\vert =k}\left[
p\left(  X\right)  \right]  .
\]
By Lemma \ref{symlem}, this is a real univariate polynomial in $\left\vert
X\right\vert $ of degree at most $2\left(  R+T\right)  $, divided through (in
every monomial) by $\left\vert S\right\vert ^{R}=\left\vert X\right\vert ^{R}%
$. \ Or said another way, it's a real Laurent polynomial in $\left\vert
X\right\vert $, with maximum exponent at most $R+2T$\ and minimum exponent at
least $-R$.
\end{proof}

Besides , We'll need several results from approximation theory,\ each of which
has previously been used (in some form) in other applications of the
polynomial method to quantum lower bounds. \ We start with the basic
inequality of Markov.

\begin{lemma}
[Markov]\label{markovlem}Let $p$\ be a real polynomial, and suppose that%
\[
\max_{x,y\in\left[  a,b\right]  }\left\vert p\left(  x\right)  -p\left(
y\right)  \right\vert \leq H.
\]
Then%
\[
\left\vert p^{\prime}\left(  x\right)  \right\vert \leq\frac{H}{b-a}%
\deg\left(  p\right)  ^{2}%
\]
for all $x\in\left[  a,b\right]  $.
\end{lemma}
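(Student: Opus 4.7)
The plan is to reduce to the canonical interval $[-1,1]$ by an affine change of variables, then invoke the classical Markov brothers' extremal inequality for polynomials bounded on $[-1,1]$. This is essentially the textbook argument, so I will emphasize where the hypothesis (oscillation bound rather than sup-norm bound) forces one extra bookkeeping step.

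First I would substitute $x = \frac{b-a}{2}\, t + \frac{a+b}{2}$, so that $t$ ranges over $[-1,1]$ as $x$ ranges over $[a,b]$. Setting $c := \tfrac12\bigl(\max_{[a,b]} p + \min_{[a,b]} p\bigr)$ and defining $\widetilde{p}(t) := p\bigl(\tfrac{b-a}{2}t + \tfrac{a+b}{2}\bigr) - c$, we obtain a polynomial of the same degree as $p$, satisfying $\lvert \widetilde{p}(t) \rvert \leq H/2$ on $[-1,1]$. The subtraction of $c$ is what converts the given oscillation hypothesis into a sup-norm bound of $H/2$; subtracting a constant does not change the derivative.

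Next I would apply the classical extremal inequality: if $\widetilde{p}$ is a real polynomial of degree $n$ with $\lvert \widetilde{p}(t)\rvert \leq M$ on $[-1,1]$, then $\lvert \widetilde{p}'(t)\rvert \leq M n^2$ on $[-1,1]$, with equality attained at $t = \pm 1$ by the Chebyshev polynomial $T_n(t) = \cos(n \arccos t)$. The substantive content is precisely this bound. Its standard proof proceeds by comparison with a scalar multiple of $T_n$: if $\lvert \widetilde{p}'(x_0)\rvert$ exceeded $Mn^2$ at some $x_0\in[-1,1]$, the equioscillation of $T_n$ (attaining $\pm M$ at $n+1$ alternating points of $[-1,1]$) would force $\widetilde{p} - \lambda T_n$, for a suitable $\lambda$, to have more sign changes than a degree-$n$ polynomial can support, contradicting Rolle's theorem applied to its derivative at $x_0$.

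Finally, the chain rule yields
\[
\lvert p'(x) \rvert \;=\; \frac{2}{b-a}\,\lvert \widetilde{p}'(t) \rvert \;\leq\; \frac{2}{b-a}\cdot\frac{H}{2}\cdot \deg(p)^2 \;=\; \frac{H}{b-a}\,\deg(p)^2,
\]
which is the claimed inequality. The main obstacle is really the Chebyshev extremal property on $[-1,1]$; the affine reduction and the chain-rule rescaling are routine. I would likely cite this extremal bound as a classical fact of approximation theory rather than reprove it in full.
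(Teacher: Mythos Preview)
Your argument is correct and is the standard reduction: center to convert the oscillation hypothesis into a sup-norm bound, affinely rescale to $[-1,1]$, invoke the classical Markov inequality there, and undo the rescaling via the chain rule. The paper, however, does not prove this lemma at all; it simply states it as Markov's classical inequality and uses it as a black box. So there is nothing to compare against beyond noting that you have supplied a valid proof where the paper chose to cite.
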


We'll also need a bound that was explicitly stated by Paturi \cite{paturi},
and which amounts to the folklore fact that, among all degree-$d$ polynomials
that are bounded within a given range, the Chebyshev polynomials have the
fastest growth outside that range.

\begin{lemma}
[Paturi]\label{paturilem}Let $p$\ be a real polynomial, and suppose that
$\left\vert p\left(  x\right)  \right\vert \leq1$\ for all $\left\vert
x\right\vert \leq1$. \ Then for all $x\leq1+\mu$, we have%
\[
\left\vert p\left(  x\right)  \right\vert \leq\exp\left(  2\deg\left(
p\right)  \sqrt{2\mu+\mu^{2}}\right)  .
\]

\end{lemma}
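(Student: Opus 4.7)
My plan is to prove the lemma in two steps: first reduce it to a statement about the Chebyshev polynomial $T_d$ (where $d = \deg(p)$), and then carry out an explicit computation using the hyperbolic representation of $T_d$. The reduction invokes the classical extremal property of Chebyshev polynomials: among all degree-$d$ real polynomials bounded by $1$ in absolute value on $[-1,1]$, the polynomial $T_d$ attains the largest value of $|p(x)|$ at every $|x| \geq 1$. (This is a standard consequence of equioscillation; see, e.g., Rivlin.) Hence it suffices to verify the claimed bound with $p = T_d$, and only for $x \in [1, 1+\mu]$: for $x \in [-1,1]$ the bound is trivial since $|p(x)| \leq 1 \leq \exp(\cdots)$, and the range $x \leq -1$ (if one reads the hypothesis as $|x| \leq 1+\mu$) follows from the parity identity $|T_d(-x)| = |T_d(x)|$.

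For $x \geq 1$ I would write $x = \cosh\theta$ with $\theta = \operatorname{arccosh}(x) \geq 0$ and use the identity $T_d(\cosh\theta) = \cosh(d\theta) \leq e^{d\theta}$, which is just the analytic continuation of $T_d(\cos\phi) = \cos(d\phi)$. At $x = 1+\mu$ this gives $\theta = \ln\bigl(1+\mu+\sqrt{2\mu+\mu^2}\bigr)$, hence
\[
|T_d(1+\mu)| \;\leq\; \bigl(1+\mu+\sqrt{2\mu+\mu^2}\bigr)^{d}.
\]
Since $\mu \leq \sqrt{2\mu+\mu^2}$ (square both sides and observe $\mu^2 \leq 2\mu+\mu^2$), and $1+y \leq e^y$ for $y \geq 0$, one gets $1+\mu+\sqrt{2\mu+\mu^2} \leq 1 + 2\sqrt{2\mu+\mu^2} \leq \exp\bigl(2\sqrt{2\mu+\mu^2}\bigr)$; raising to the $d$-th power yields the claimed inequality.

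The only real obstacle is keeping the constants honest in this last chain of elementary inequalities: different ways of bounding the logarithm $\ln\bigl(1+\mu+\sqrt{2\mu+\mu^2}\bigr)$ produce slightly different numerical factors, and one has to land exactly on $2\sqrt{2\mu+\mu^2}$ rather than something weaker. The Chebyshev extremality itself is a classical black box that has been used many times in the quantum lower-bound literature, and no new approximation-theoretic input is required.
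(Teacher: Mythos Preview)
The paper does not give its own proof of this lemma; it simply cites Paturi and remarks that the statement ``amounts to the folklore fact that, among all degree-$d$ polynomials that are bounded within a given range, the Chebyshev polynomials have the fastest growth outside that range.'' Your argument is correct and is precisely an execution of that folklore: reduce to $T_d$ via the Chebyshev extremal property, then bound $T_d(1+\mu)=\cosh(d\operatorname{arccosh}(1+\mu))$ explicitly. The chain $1+\mu+\sqrt{2\mu+\mu^2}\le 1+2\sqrt{2\mu+\mu^2}\le \exp\bigl(2\sqrt{2\mu+\mu^2}\bigr)$ is clean and lands on the stated constant, so there is nothing to fix.
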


We now state a useful corollary of Lemma \ref{paturilem}, which says (in
effect) that slightly shrinking the domain of a low-degree real polynomial can
only modestly shrink its range.

\begin{corollary}
\label{paturicor}Let $p$\ be a real polynomial of degree $d$, and suppose that%
\[
\max_{x,y\in\left[  a,b\right]  }\left\vert p\left(  x\right)  -p\left(
y\right)  \right\vert \geq H.
\]
Let $\varepsilon\leq\frac{1}{100d^{2}}$\ and $a^{\prime}:=a+\varepsilon\left(
b-a\right)  $. \ Then%
\[
\max_{x,y\in\left[  a^{\prime},b\right]  }\left\vert p\left(  x\right)
-p\left(  y\right)  \right\vert \geq\frac{H}{2}.
\]

\end{corollary}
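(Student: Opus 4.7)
The plan is to argue by contradiction. Assume $\max_{x,y \in [a',b]} |p(x) - p(y)| < H/2$; the aim is to use Paturi's lemma (Lemma \ref{paturilem}) to extrapolate this bound back to all of $[a,b]$ and conclude $\max_{x,y \in [a,b]} |p(x) - p(y)| < H$, contradicting the hypothesis. The intuition is that $a$ sits only an $\varepsilon$-fraction outside $[a',b]$ relative to the length $b-a'$, so a Chebyshev-type growth bound can only inflate the range of $p$ by a small constant factor.

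To carry this out, let $m := \min_{x \in [a',b]} p(x)$ and $M := \max_{x \in [a',b]} p(x)$, so by the contradiction assumption $M - m < H/2$. Rescale via the affine bijection sending $[-1,1]$ onto $[a',b]$, and normalize by setting $\bar p(t) := \frac{2}{M-m}\left(p\left(\frac{a'+b}{2} + t \cdot \frac{b-a'}{2}\right) - \frac{m+M}{2}\right)$. Then $\bar p$ is a degree-$d$ polynomial with $|\bar p(t)| \leq 1$ on $[-1,1]$. A short computation places the excluded endpoint $a$ at $t_0 = -(1+\varepsilon)/(1-\varepsilon)$, i.e., at distance $\mu := 2\varepsilon/(1-\varepsilon) \leq 3\varepsilon$ outside $[-1,1]$; hence every $x \in [a,b]$ corresponds to some $t$ with $|t| \leq 1 + \mu$.

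Now I would apply Lemma \ref{paturilem} (together with its reflection $t \mapsto -t$ to cover the negative side): for $|t| \leq 1+\mu$, $|\bar p(t)| \leq \exp(2d\sqrt{2\mu + \mu^2})$. Since $\varepsilon \leq 1/(100d^2)$, one has $\mu \leq 3/(100d^2) \leq 1$, so $2\mu + \mu^2 \leq 3\mu$ and the exponent is at most $2d\sqrt{3\mu} \leq 2d \cdot \frac{3}{10d} = \frac{3}{5} < \ln 2$. Hence $|\bar p(t)| < 2$ throughout $[-1-\mu, 1+\mu]$, which unwinds to $|p(x) - (m+M)/2| < M - m < H/2$ for every $x \in [a,b]$; the oscillation on $[a,b]$ is therefore strictly less than $H$, the desired contradiction.

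The only subtle point is verifying that the constant $1/100$ in the hypothesis is tight enough to push the exponential factor below $2$, but the arithmetic above shows there is comfortable slack; relaxing $1/100$ to $1/C$ for a larger $C$ would yield any factor better than $1/2$ in the conclusion.
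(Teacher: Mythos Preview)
Your proof is correct and follows essentially the same route as the paper: assume the oscillation on $[a',b]$ is below $H/2$, recentre and rescale so that $|p|\le 1$ on the normalized interval, then invoke Lemma~\ref{paturilem} to bound $|p|$ on the slightly larger interval containing $a$, and check that the Chebyshev growth factor stays below $2$ under the hypothesis $\varepsilon\le 1/(100d^2)$. Your explicit affine computation of $\mu=2\varepsilon/(1-\varepsilon)$ is in fact a bit more careful than the paper's (which writes $\mu=\tfrac{1}{1-\varepsilon}-1$), but the slack in the constants absorbs the difference either way.
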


\begin{proof}
Suppose by contradiction that%
\[
\left\vert p\left(  x\right)  -p\left(  y\right)  \right\vert <\frac{H}{2}%
\]
for all $x,y\in\left[  a^{\prime},b\right]  $. \ By affine shifts, we can
assume without loss of generality that $\left\vert p\left(  x\right)
\right\vert <\frac{H}{4}$\ for all $x\in\left[  a^{\prime},b\right]  $. \ Then
by Lemma \ref{paturilem}, for all $x\in\left[  a,b\right]  $\ we have%
\[
\left\vert p\left(  x\right)  \right\vert <\frac{H}{4}\cdot\exp\left(
2d\sqrt{2\left(  \frac{1}{1-\varepsilon}-1\right)  +\left(  \frac
{1}{1-\varepsilon}-1\right)  ^{2}}\right)  \leq\frac{H}{2}.
\]
But this violates the hypothesis.
\end{proof}

Finally, we'll need a bound that relates the range of a low-degree polynomial
on a discrete set of points to its range on a continuous interval. \ The
following lemma generalizes a result due to Ehlich and Zeller \cite{ez} and
Rivlin and Cheney \cite{rc}, who were interested only in the case where the
discrete points are evenly spaced.

\begin{lemma}
\label{ezrclem}Let $p$\ be a real polynomial of degree at most $\sqrt{N}$, and
let $0=z_{1}<\cdots<z_{M}=N$\ be a list of points such that $z_{i+1}-z_{i}%
\leq1$\ for all $i$ (the simplest example being the integers $0,\ldots,N$).
\ Suppose that%
\[
\max_{x,y\in\left[  0,N\right]  }\left\vert p\left(  x\right)  -p\left(
y\right)  \right\vert \geq H.
\]
Then%
\[
\max_{i,j}\left\vert p\left(  z_{i}\right)  -p\left(  z_{j}\right)
\right\vert \geq\frac{H}{2}.
\]

\end{lemma}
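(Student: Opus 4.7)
The plan is to argue by contradiction. Suppose $\max_{i,j}|p(z_i)-p(z_j)| < H/2$. After a vertical shift of $p$, which affects neither the hypothesis nor the conclusion, I may assume $|p(z_i)| < H/4$ for every $i$. Let $M := \max_{x\in[0,N]}|p(x)|$. Since $\max_{x,y\in[0,N]}|p(x)-p(y)|\leq 2M$ and this quantity is at least $H$ by hypothesis, I get $M\geq H/2$. Pick $x^*\in[0,N]$ with $|p(x^*)|=M$; replacing $p$ by $-p$ if necessary, assume $p(x^*)=M$.

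If $x^*\in\{0,N\}$, then $x^*$ is itself one of the $z_i$ (since $z_1=0$ and $z_M=N$), which forces $M=p(x^*)<H/4<H/2\leq M$, a contradiction. Thus $x^*\in(0,N)$ is an interior maximum of $p$, so $p'(x^*)=0$. The spacing hypothesis $z_{i+1}-z_i\leq 1$ guarantees a grid point $z_j$ with $|x^*-z_j|\leq 1/2$.

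Because $p'(x^*)=0$, Taylor's theorem at $x^*$ yields
\[
|p(z_j)-p(x^*)|\leq \tfrac{1}{2}(z_j-x^*)^2 \max_{x\in[0,N]}|p''(x)| \leq \tfrac{1}{8}\max_{x\in[0,N]}|p''(x)|.
\]
To control $\max|p''|$, I apply Lemma \ref{markovlem} twice. First, since $|p|\leq M$ on $[0,N]$ its variation there is at most $2M$, so with $d:=\deg p$ we get $\max_{[0,N]}|p'|\leq 2Md^2/N$. Second, applying Markov to $p'$ (whose variation is at most $2\max|p'|$ and whose degree is $d-1$), I obtain $\max_{[0,N]}|p''|\leq 4Md^4/N^2$. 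The hypothesis $d\leq\sqrt{N}$ gives $d^4/N^2\leq 1$, so $\max|p''|\leq 4M$ and hence $|p(z_j)-p(x^*)|\leq M/2$. Therefore $|p(z_j)|\geq M-M/2 = M/2 \geq H/4$, contradicting $|p(z_j)|<H/4$.

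The main obstacle is surviving the tight case $d=\sqrt{N}$. A naive one-shot Markov estimate would yield $|p(x^*)-p(z_j)|\leq (1/2)\max|p'|\leq M$, which is useless since it is the very quantity we need to bound away from $M$. It is precisely the vanishing of $p'$ at the interior extremum $x^*$ that promotes the linear Taylor term to a quadratic one, gaining an extra factor of $(z_j-x^*)$ so that a second application of Markov can contribute the decisive factor $d^2/N\leq 1$. Without exploiting that $x^*$ is an extremum (which in turn requires ruling out the endpoint case $x^*\in\{0,N\}$ via $z_1=0$, $z_M=N$), the argument does not close at the stated degree threshold.
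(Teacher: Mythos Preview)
Your proof is correct and takes a genuinely different route from the paper's. Both arguments begin identically---assume for contradiction that the grid oscillation is below $H/2$, shift so that $|p(z_i)|<H/4$, and locate the continuous extremum---but then diverge. The paper applies the mean value theorem between the extremum and an adjacent grid point to produce a point where $|p'|$ is large, then invokes Markov's inequality \emph{once} to bound that derivative from above; comparing the two bounds forces $\max|p|<H/2$, contradicting the hypothesis. You instead exploit that the interior extremum kills the first-order Taylor term, reducing matters to bounding $|p''|$, which you do by applying Markov \emph{twice} (to $p$ and then to $p'$). Your approach is slightly more elaborate but makes transparent exactly why the threshold $d\le\sqrt{N}$ appears: each Markov application contributes a factor $d^2/N$, and two of them combine with the quadratic Taylor remainder to give $d^4/N^2\le 1$. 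The paper's version is marginally slicker (one Markov application, no second derivative), while yours is equally valid and arguably more suggestive if one ever wanted to push the argument to higher-order remainders or tighter spacing.
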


\begin{proof}
Suppose by contradiction that%
\[
\left\vert p\left(  z_{i}\right)  -p\left(  z_{j}\right)  \right\vert
<\frac{H}{2}%
\]
for all $i,j$. \ By affine shifts, we can assume without loss of generality
that $\left\vert p\left(  z_{i}\right)  \right\vert <\frac{H}{4}$\ for all
$i$. \ Let%
\[
c:=\max_{x\in\left[  0,N\right]  }\frac{\left\vert p\left(  x\right)
\right\vert }{H/4}.
\]
If $c\leq1$, then the hypothesis clearly fails, so assume $c>1$. \ Suppose
that the maximum, $\left\vert p\left(  x\right)  \right\vert =\frac{cH}{4}$,
is achieved between $z_{i}$\ and $z_{i+1}$. \ Then by basic calculus, there
exists an $x^{\ast}\in\left[  z_{i},z_{i+1}\right]  $\ such that%
\[
\left\vert p^{\prime}\left(  x^{\ast}\right)  \right\vert >\frac{2\left(
c-1\right)  }{z_{i+1}-z_{i}}\cdot\frac{H}{4}\geq\frac{\left(  c-1\right)
H}{2}.
\]
So by Lemma \ref{markovlem},%
\[
\frac{\left(  c-1\right)  H}{2}<\frac{cH/4}{N}\deg\left(  p\right)  ^{2}.
\]
Solving for $c$, we find%
\[
c<\frac{2N}{2N-\deg\left(  p\right)  ^{2}}\leq2.
\]
But if $c<2$, then $\max_{x\in\left[  0,N\right]  }\left\vert p\left(
x\right)  \right\vert <\frac{H}{2}$, which violates the hypothesis.
\end{proof}

We're now ready to prove the main result of this paper.

\begin{theorem}
\label{main}Let $Q$ be a quantum algorithm that receives $R$\ copies of
$\left\vert S\right\rangle $, makes $T$ queries to $\mathcal{O}_{S}$, and
decides whether $\left\vert S\right\vert =w$\ or $\left\vert S\right\vert =2w$
with success probability at least $2/3$, promised that one of those is the
case. \ Then either $T=\Omega\left(  \sqrt{\frac{N}{w}}\right)  $\ or%
\[
R=\Omega\left(  \min\left\{  w^{1/4},\sqrt{\frac{N}{w}}\right\}  \right)  .
\]

\end{theorem}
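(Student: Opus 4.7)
The plan is to apply Lemma~\ref{laurentlem} to encode the expected acceptance probability of $Q$ as a real univariate Laurent polynomial $q(k) = u(k) + v(1/k)$, where $u, v$ are ordinary polynomials with $\deg u \le R+2T$ and $\deg v \le R$, and then derive a contradiction via the switching/``unbounded growth'' argument flagged in the introduction. By construction $q(k) \in [0,1]$ at every integer $k \in \{1,\ldots,N\}$, the success hypothesis gives $|q(w) - q(2w)| \ge 1/3$, and $v(0) = 0$ because the Laurent expansion contains only strictly negative powers of $k$ below the constant term.

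Suppose for contradiction that $T = o(\sqrt{N/w})$ and $R = o(\min\{w^{1/4}, \sqrt{N/w}\})$, so that $\deg u \ll \sqrt{N/w}$ and $\deg v \ll w^{1/4}$. The first step is to split the $1/3$ gap into its two contributions: either $|u(w) - u(2w)| \ge 1/6$ or $|v(1/w) - v(1/(2w))| \ge 1/6$; assume without loss of generality the former (the $v$-case is analogous after exploiting $v(0)=0$ to pass to the interval $[0, 1/w]$). Since $u$ swings by $1/6$ across the short subinterval $[w, 2w] \subset [0,N]$ yet has degree $\ll \sqrt{N/w}$, Markov's inequality (Lemma~\ref{markovlem}) forces $u$ to reach magnitude $\omega(1)$ somewhere on $[0,N]$, and Lemma~\ref{ezrclem} (valid because $\deg u \ll \sqrt{N}$) then realizes this value at some integer $k_0 \in [0,N]$.

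This sets up the switching iteration. Since $|q(k_0)| \le 1$, we get $|v(1/k_0)| \ge |u(k_0)| - 1 = \omega(1)$, so $v$ is large at the single point $y_0 := 1/k_0$. Combined with $v(0)=0$, the polynomial $v$ must oscillate by $\omega(1)$ on $[0,y_0]$. A rescaled application of Lemma~\ref{ezrclem} and Corollary~\ref{paturicor} to $v$, permissible because $\deg v \ll w^{1/4}$, promotes this continuous oscillation to an oscillation of $v$ at the image of integer $k$-points. The bound $|q(k)| \le 1$ at those integers then forces $u$ to cancel the oscillation, pushing $u$ to a strictly larger extreme than $|u(k_0)|$. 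Choosing the sub-intervals so that each pass amplifies the required magnitude by a factor bounded away from $1$, the iteration produces polynomials taking arbitrarily large values, contradicting the assumption that they have finite degree.

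The main obstacle is the quantitative calibration at each switch. The integer grid $\{1,\ldots,N\}$ and its image $\{1/k\}$ under inversion are grossly non-uniform in spacing, so one must carefully select the sub-intervals and rescalings for each invocation of Paturi (Lemma~\ref{paturilem}), Corollary~\ref{paturicor}, and Lemma~\ref{ezrclem}, and verify that the amplification factor at every round strictly exceeds $1$. The precise threshold $\deg v = o(w^{1/4})$ in the statement should arise as exactly the regime in which the Ehlich--Zeller-type extension continues to apply on the inverted grid while Paturi simultaneously supplies the needed growth; pinning down this exponent, rather than a weaker bound, is where the real work of the proof lies.
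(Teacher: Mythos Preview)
Your high-level strategy---encoding $q$ as a Laurent polynomial $u(k)+v(1/k)$, splitting the $1/3$ gap between $u$ and $v$, and running a switching argument in which Markov, Paturi, and Ehlich--Zeller force each polynomial to ever larger values---is exactly the paper's. The gap is in the execution, and it is precisely the step you flag as ``the main obstacle.'' You apply Markov to $u$ on all of $[0,N]$, invoke Lemma~\ref{ezrclem} to locate a \emph{single} integer $k_0$ with $|u(k_0)|$ large, and then bounce to $v$ on $[0,1/k_0]$ using the anchor $v(0)=0$. But nothing prevents $k_0$ from being $O(1)$, in which case the inverse-integer grid inside $[0,1/k_0]$ has gaps of order $1$ near the right endpoint---far too coarse for Lemma~\ref{ezrclem} to control a polynomial of degree only $o(w^{1/4})$. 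More generally, chasing a single extreme point back and forth gives no control over where the next extreme lands, so the amplification factor you need cannot be pinned above~$1$ uniformly across iterations.

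The paper sidesteps this by working throughout with \emph{ranges} over fixed, interlocking intervals rather than single points. For $u$: measure the gap $G_u$ on $[\sqrt{w},2w]$, apply Markov on $[\sqrt{w},N]$, shrink via Corollary~\ref{paturicor} to $[w,N]$, then Lemma~\ref{ezrclem} to pass to integers $\{w,\ldots,N\}$, yielding a discrete range $L_u$. For $v$: measure $G_v$ on $[1/N,1/w]$, apply Markov on $[1/N,1/\sqrt{w}]$, shrink to $[1/(2w),1/\sqrt{w}]$, then Lemma~\ref{ezrclem} (after rescaling by $w$) on the inverse-integers $\{1/k:\sqrt{w}\le k\le 2w\}$, yielding $L_v$. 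The intervals are engineered so that the discrete set for $L_u$ maps under $k\mapsto 1/k$ into the interval defining $G_v$, and the discrete set for $L_v$ sits inside the interval defining $G_u$; boundedness of $q$ then gives $G_v\ge L_u-1$ and $G_u\ge L_v-1$ directly, with no tracking of individual points. Chaining yields $L_u=\Omega\bigl(\frac{N}{w\deg(u)^2}\,G_u\bigr)$ and $L_v=\Omega\bigl(\frac{\sqrt{w}}{\deg(v)^2}\,G_v\bigr)$, and under $\deg u=o(\sqrt{N/w})$, $\deg v=o(w^{1/4})$ both amplification factors are $\omega(1)$, which is the blow-up. The anchor $v(0)=0$ is never used.
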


\begin{proof}
Let%
\[
q\left(  k\right)  :=\operatorname{E}_{\left\vert S\right\vert =k}\left[
\Pr\left[  Q^{\mathcal{O}_{S}}\left(  \left\vert S\right\rangle ^{\otimes
R}\right)  \text{ accepts}\right]  \right]  .
\]
Then by Lemma \ref{laurentlem}, we can write $q$ as a Laurent polynomial, like
so:%
\[
q\left(  k\right)  =u\left(  k\right)  +v\left(  1/k\right)  ,
\]
where $u$\ is a real polynomial in $k$ with $\deg\left(  u\right)  \leq2T+R$,
and $v$\ is a real polynomial in $1/k$\ with $\deg\left(  v\right)  \leq R$.
\ So to prove the theorem, it suffices to show that either $\deg\left(
u\right)  =\Omega\left(  \sqrt{\frac{N}{w}}\right)  $, or else $\deg\left(
v\right)  =\Omega\left(  w^{1/4}\right)  $. \ To do so, we'll assume that
$\deg\left(  u\right)  =o\left(  \sqrt{\frac{N}{w}}\right)  $\ and
$\deg\left(  v\right)  =o\left(  w^{1/4}\right)  $, and derive a contradiction.

Our high-level strategy is as follows: we'll observe that, if approximate
counting is successfully being solved, then either $u$ or $v$ must attain a
large first derivative somewhere in its domain. \ By the approximation theory
lemmas that we proved earlier, this will force that polynomial to have a large
range---even on a subset of integer (or inverse-integer) points. \ But the
sum, $u\left(  k\right)  +v\left(  1/k\right)  $, is bounded in $\left[
0,1\right]  $\ for all $k\in\left[  N\right]  $. \ So if one polynomial has a
large range, then the other does too. \ But this forces the \textit{other}
polynomial to have a large derivative somewhere in its domain, and therefore
(by approximation theory) to have an even larger range, forcing the first
polynomial to have an even larger range to compensate, and so on. \ As long as
$\deg\left(  u\right)  $\ and $\deg\left(  v\right)  $ are both small enough,
this endless switching will force both $u$ and $v$ to attain
\textit{unboundedly }large values---with the fact that one polynomial is in
$k$, and the other is in $1/k$, crucial to achieving the desired
\textquotedblleft explosion.\textquotedblright\ \ Since $u$ and $v$ are
polynomials on compact sets, such unbounded growth is an obvious absurdity,
and this will give us the desired contradiction.

In more detail, we will study the following quantities.%
\[%
\begin{tabular}
[c]{ll}%
$G_{u}:=\max_{x,y\in\left[  \sqrt{w},2w\right]  }\left\vert u\left(  x\right)
-u\left(  y\right)  \right\vert ~\ \ \ \ \ \ $ & $G_{v}:=\max_{x,y\in\left[
\frac{1}{N},\frac{1}{w}\right]  }\left\vert v\left(  x\right)  -v\left(
y\right)  \right\vert $\\
$\Delta_{u}:=\max_{x\in\left[  \sqrt{w},2w\right]  }\left\vert u^{\prime
}\left(  x\right)  \right\vert $ & $\Delta_{v}:=\max_{x\in\left[  \frac{1}%
{N},\frac{1}{w}\right]  }\left\vert v^{\prime}\left(  x\right)  \right\vert
$\\
$H_{u}:=\max_{x,y\in\left[  \sqrt{w},N\right]  }\left\vert u\left(  x\right)
-u\left(  y\right)  \right\vert $ & $H_{v}:=\max_{x,y\in\left[  \frac{1}%
{N},\frac{1}{\sqrt{w}}\right]  }\left\vert v\left(  x\right)  -v\left(
y\right)  \right\vert $\\
$I_{u}:=\max_{x,y\in\left[  w,N\right]  }\left\vert u\left(  x\right)
-u\left(  y\right)  \right\vert $ & $I_{v}:=\max_{x,y\in\left[  \frac{1}%
{2w},\frac{1}{\sqrt{w}}\right]  }\left\vert v\left(  x\right)  -v\left(
y\right)  \right\vert $\\
$L_{u}:=\max_{x,y\in\left\{  w,\ldots,N\right\}  }\left\vert u\left(
x\right)  -u\left(  y\right)  \right\vert $ & $L_{v}:=\max_{x,y\in\left\{
\sqrt{w},\ldots,2w\right\}  }\left\vert v\left(  \frac{1}{x}\right)  -v\left(
\frac{1}{y}\right)  \right\vert $%
\end{tabular}
\]

We have $0\leq q\left(  k\right)  \leq1$\ for all $k\in\left[  N\right]  $,
since in those cases $q\left(  k\right)  $\ represents a probability. \ Since
$Q$ solves approximate counting, we also have $q\left(  w\right)  \leq\frac
{1}{3}$\ and $q\left(  2w\right)  \geq\frac{2}{3}$. \ This means in particular
that either

\begin{enumerate}
\item[(i)] $u\left(  2w\right)  -u\left(  w\right)  \geq\frac{1}{6}$, and
hence $G_{u}\geq\frac{1}{6}$, or else

\item[(ii)] $v\left(  \frac{1}{2w}\right)  -v\left(  \frac{1}{w}\right)
\geq\frac{1}{6}$, and hence $G_{v}\geq\frac{1}{6}$.
\end{enumerate}

We will show that either case leads to a contradiction.

We have the following inequalities regarding $u$:%
\[%
\begin{tabular}
[c]{ll}%
$G_{u}\geq L_{v}-1$ & by the boundedness of $q$\\
$\Delta_{u}\geq\frac{G_{u}}{2w}$ & by basic calculus\\
$H_{u}\geq\frac{\Delta_{u}\left(  N-\sqrt{w}\right)  }{\deg\left(  u\right)
^{2}}$ & by Lemma \ref{markovlem}\\
$I_{u}\geq\frac{H_{u}}{2}$ & by Corollary \ref{paturicor}\\
$L_{u}\geq\frac{I_{u}}{2}$ & by Lemma \ref{ezrclem}%
\end{tabular}
\]
Here the fourth inequality uses the fact that, setting $\varepsilon
:=\frac{\sqrt{w}}{N}$, we have $\deg\left(  u\right)  =o\left(  \frac{1}%
{\sqrt{\varepsilon}}\right)  $ (thereby satisfying the hypothesis of Corollary
\ref{paturicor}), while the fifth inequality uses the fact that $\deg\left(
u\right)  =o\left(  \sqrt{N}\right)  $.

Meanwhile, we have the following inequalities regarding $v$:%
\[%
\begin{tabular}
[c]{ll}%
$G_{v}\geq L_{u}-1$ & by the boundedness of $q$\\
$\Delta_{v}\geq G_{v}w$ & by basic calculus\\
$H_{v}\geq\frac{\Delta_{v}\left(  \frac{1}{\sqrt{w}}-\frac{1}{N}\right)
}{\deg\left(  v\right)  ^{2}}$ & by Lemma \ref{markovlem}\\
$I_{v}\geq\frac{H_{v}}{2}$ & by Corollary \ref{paturicor}\\
$L_{v}\geq\frac{I_{v}}{2}$ & by Lemma \ref{ezrclem}%
\end{tabular}
\]
Here the fourth inequality uses the fact that, setting $\varepsilon
:=\frac{1/2w}{1/\sqrt{w}}=\frac{1}{2\sqrt{w}}$, we have $\deg\left(  v\right)
=o\left(  \frac{1}{\sqrt{\varepsilon}}\right)  $ (thereby satisfying the
hypothesis of Corollary \ref{paturicor}). \ The fifth inequality uses the fact
that, if we set $V\left(  x\right)  :=v\left(  x/w\right)  $, then the
situation\ satisfies the hypothesis of Lemma \ref{ezrclem}:\ we are interested
in the range of $V$ on the interval $\left[  \frac{1}{2},\sqrt{w}\right]  $,
compared to its range on discrete points $\frac{w}{\sqrt{w}},\frac{w}{\sqrt
{w}+1},\ldots,\frac{w}{2w}$\ that are spaced at most $1$ apart from each
other; and we also have $\deg\left(  V\right)  =\deg\left(  v\right)
=o\left(  w^{1/4}\right)  $.

All that remains is to show that, if we insert either $G_{u}\geq\frac{1}{6}%
$\ or $G_{v}\geq\frac{1}{6}$ into the coupled system of inequalities above,
then we get unbounded growth and the inequalities have no solution. \ Let us
collapse the two sets of inequalities to%
\begin{align*}
L_{u}  &  \geq\frac{1}{4}\frac{N-\sqrt{w}}{\deg\left(  u\right)  ^{2}}%
\frac{G_{u}}{2w}=\Omega\left(  \frac{N}{w\deg\left(  u\right)  ^{2}}%
G_{u}\right)  ,\\
L_{v}  &  \geq\frac{1}{4}\frac{\frac{1}{\sqrt{w}}-\frac{1}{N}}{\deg\left(
v\right)  ^{2}}G_{v}w=\Omega\left(  \frac{\sqrt{w}}{\deg\left(  v\right)
^{2}}G_{v}\right)  .
\end{align*}
Hence%
\begin{align*}
G_{u}  &  \geq L_{v}-1=\Omega\left(  \frac{\sqrt{w}}{\deg\left(  v\right)
^{2}}G_{v}\right)  -1,\\
G_{v}  &  \geq L_{u}-1=\Omega\left(  \frac{N}{w\deg\left(  u\right)  ^{2}%
}G_{u}\right)  -1.
\end{align*}
By the assumption that $\deg\left(  v\right)  =o\left(  w^{1/4}\right)  $\ and
$\deg\left(  u\right)  =o\left(  \sqrt{\frac{N}{w}}\right)  $, we have
$\frac{\sqrt{w}}{\deg\left(  v\right)  ^{2}}\gg1$\ and $\frac{N}{w\deg\left(
u\right)  ^{2}}\gg1$. \ Plugging in $G_{u}\geq\frac{1}{6}$\ or $G_{v}\geq
\frac{1}{6}$, this is enough to give us unbounded growth.
\end{proof}

\section{Improvements\label{IMPROVE}}

At our request, user \textquotedblleft fedja\textquotedblright\ on
MathOverflow kindly proved the following lemma in approximation theory (see
the link\footnote{See
https://mathoverflow.net/questions/302113/real-polynomial-bounded-at-inverse-integer-points}
for the proof):

\begin{lemma}
[fedja]\label{fedjalem}Let $p$\ be a real polynomial, and suppose that
$\left\vert p\left(  1/k\right)  \right\vert \leq1$ for all $k\in\left[
2w\right]  $, and that $p\left(  \frac{1}{w}\right)  \leq\frac{1}{3}$\ while
$p\left(  \frac{1}{2w}\right)  \geq\frac{2}{3}$. \ Then $\deg\left(  p\right)
=\Omega\left(  w^{1/3}\right)  $.
\end{lemma}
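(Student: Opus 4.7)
My plan is to mirror the contradiction-via-approximation-theory strategy of Theorem \ref{main}. Assume for contradiction that $d := \deg(p) = o(w^{1/3})$. The jump hypothesis $p(1/(2w)) - p(1/w) \ge 1/3$ over an interval of length $1/(2w)$ yields, via the mean value theorem, a point $x^* \in [1/(2w), 1/w]$ at which $|p'(x^*)| \ge 2w/3$. I aim to combine this derivative lower bound with the grid constraint $|p(1/k)| \le 1$ to derive a contradiction.

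I would proceed by selecting a sub-interval of the form $[1/(2w), 1/m]$ for a parameter $m \le w$, containing $x^*$. On this sub-interval the grid points $\{1/k : m \le k \le 2w\}$ have maximum spacing $\sim 1/m^2$, and after rescaling to make consecutive spacings at most $1$, the interval length is $\sim m$, so Lemma \ref{ezrclem} applies provided $d \lesssim \sqrt{m}$. Under this condition, the grid bound $|p(1/k)| \le 1$ transfers (via Lemma \ref{ezrclem}, and if necessary Corollary \ref{paturicor} to shave off the endpoint) to a bound on the range of $p$ on $[1/(2w), 1/m]$ by a constant; Lemma \ref{markovlem} then yields $|p'(x^*)| \le O(d^2 m)$. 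Combining with the derivative lower bound forces $d^2 m \gtrsim w$, and optimizing over $m$ subject to $m \ge d^2$ gives only $d \ge \Omega(w^{1/4})$, not the desired $w^{1/3}$.

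The main obstacle is closing the remaining gap between the bound $w^{1/4}$ obtained by a single Markov--Ehlich-Zeller pass and the claimed $w^{1/3}$. I expect this to be bridged either by the substitution $r(y) := y^d p(1/y)$, turning the problem into one about a polynomial of degree $d$ bounded by the weight $y^d$ on the integer grid $\{1, \ldots, 2w\}$, so that weighted Markov--Bernstein inequalities become applicable; or by an iterative bootstrap in the style of the $G,\Delta,H,I,L$ chain of Theorem \ref{main}, in which Corollary \ref{paturicor} is used to keep re-localising the extracted range to a region where the grid is denser, and each pass amplifies the lower bound on $d$. Isolating the precise sharpening that furnishes the missing factor is the crux of the proof, and is the content of the MathOverflow post referenced in the paper; I would consult it for the final technical step rather than try to reinvent it here.
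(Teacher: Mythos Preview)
Your proposal is not a proof of the lemma: you explicitly obtain only $d = \Omega(w^{1/4})$ and then defer the entire improvement from $w^{1/4}$ to $w^{1/3}$ to the external MathOverflow post. But that improvement \emph{is} the content of Lemma~\ref{fedjalem}; the $w^{1/4}$ bound is already what the proof of Theorem~\ref{main} extracts from the same ingredients (Markov, Corollary~\ref{paturicor}, Lemma~\ref{ezrclem}) applied to the $v$-side. So the single Markov--Ehlich--Zeller pass you describe is not a near-miss needing a minor sharpening; it is known to saturate at $w^{1/4}$, and neither of your two suggested fixes is fleshed out enough to constitute an argument. In particular, the iterative bootstrap you mention is exactly the mechanism of Theorem~\ref{main}, and the paper itself notes that pushing that argument past $w^{1/4}$ ``breaks several steps in the proof simultaneously.''

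That said, the paper does not prove Lemma~\ref{fedjalem} either: it states the lemma and refers to fedja's MathOverflow post for the proof. So your proposal and the paper end up at the same place---both cite the external argument---but your write-up should not present the Markov/grid computation as a partial proof of the lemma, since it recovers nothing beyond what Theorem~\ref{main} already gives. If you want to include content here, the honest thing is either to reproduce fedja's actual argument (which is not the Markov-inequality route) or simply to cite it, as the paper does.
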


Interestingly, Lemma \ref{fedjalem} turns out to be tight. \ We give the
construction for completeness:

\begin{lemma}
[fedja]\label{fedjatight}For all $w$, there is a real polynomial $p$\ such
that $\left\vert p\left(  1/k\right)  \right\vert \leq1$ for all $k\in\left[
2w\right]  $, and $p\left(  \frac{1}{w}\right)  \leq\frac{1}{3}$\ while
$p\left(  \frac{1}{2w}\right)  \geq\frac{2}{3}$, and $\deg\left(  p\right)
=O\left(  w^{1/3}\right)  $.
\end{lemma}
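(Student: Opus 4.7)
The plan is to construct $p$ explicitly as a Chebyshev polynomial scaled to the transition scale between $1/w$ and $1/(2w)$, multiplied by a polynomial factor that vanishes at the smallest inverse integers in order to cancel the Chebyshev's growth there. Fix $d := Cw^{1/3}$ for a constant $C$ to be chosen, let $T_d$ denote the Chebyshev polynomial of the first kind, and let $L(y) = \alpha y + \beta$ be the unique affine map with $L(1/(2w)) = 1$ and $L(1/w) = 1 - c_0/d^2$, where $c_0 := (\arccos(1/3))^2/2$ is chosen so that, using $T_d(1-\mu) = \cos(d\sqrt{2\mu}) + O(d\mu^{3/2})$, we get $T_d(L(1/w)) = 1/3 + O(w^{-2/3})$. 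Solving the two linear conditions gives $\alpha = -2wc_0/d^2 = -\Theta(w^{1/3})$ and $\beta = 1 + O(w^{-2/3})$. Let $K$ be the smallest positive integer with $L(1/K) \geq -1$, so that $K = \Theta(w^{1/3})$; set $P(y) := \prod_{k=1}^K (y - 1/k)$ and define
\[
p(y) := \frac{T_d(L(y))\, P(y)}{T_d(L(1/(2w)))\, P(1/(2w))}.
\]
Then $\deg p = d + K = O(w^{1/3})$. (A lower-order adjustment to $c_0$ absorbs the $O(w^{-2/3})$ slack, so we may in fact assume $T_d(L(1/w)) \leq 1/3$.)

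I would then verify the three required properties in turn. First, $p(1/(2w)) = 1 \geq 2/3$ holds by construction of the normalization. Second, $p(1/w) = T_d(L(1/w)) \cdot [P(1/w)/P(1/(2w))] \leq (1/3)\cdot 1 = 1/3$, using the key ratio bound $|P(1/k)/P(1/(2w))| \leq 1$ (proved below) applied at $k = w$. Third, for $k \in [2w]$: if $k \leq K$ then $P(1/k) = 0$ and $p(1/k) = 0$; otherwise $K < k \leq 2w$, and the defining property of $K$ gives $L(1/k) \in [-1,1]$, whence $|T_d(L(1/k))| \leq 1$ and the ratio bound again gives $|p(1/k)| \leq 1$.

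The key ratio bound for $K < k \leq 2w$ follows by factoring:
\[
\frac{|P(1/k)|}{|P(1/(2w))|} = \prod_{j=1}^{K} \frac{|1/k - 1/j|}{|1/(2w) - 1/j|} = \prod_{j=1}^{K} \frac{2w(k-j)}{k(2w-j)},
\]
and each factor is at most $1$ because $2w(k-j) \leq k(2w-j)$ rearranges to $kj \leq 2wj$, i.e.\ $k \leq 2w$. Hence the product is at most $1$ as claimed.

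The main obstacle is the simultaneous balancing of the three scales $d$, $|\alpha|$, and $K$, each of which must be $\Theta(w^{1/3})$ for the construction to close. The requirement that $T_d$ descends from $1$ at $L(1/(2w))$ to $1/3$ at $L(1/w)$ across the tiny interval $[1 - c_0/d^2, 1]$ forces $d^2 \cdot (c_0/d^2) = \Theta(1)$, i.e.\ $d$ must be large enough to oscillate on this scale; the requirement $L(1/K) \geq -1$ forces $K = \Theta(|\alpha|) = \Theta(w/d^2)$. Substituting shows the total degree is $d + K = (C + c_0/(2C^2))w^{1/3}$, minimized by $C = c_0^{1/3}$, which pins down the optimal constant in the $O(w^{1/3})$ bound and exhibits exactly why the cube-root threshold from Lemma \ref{fedjalem} is tight.
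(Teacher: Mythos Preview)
Your construction is correct and is essentially the same as the paper's: both take the product of a degree-$\Theta(w^{1/3})$ Chebyshev polynomial, affinely scaled so that the transition from $\geq 2/3$ to $\leq 1/3$ occurs between $1/(2w)$ and $1/w$, with a ``zero-planting'' factor $\prod_{j\le \Theta(w^{1/3})}(y-1/j)$ (equivalently $\prod_j (1-jy)$) that kills the Chebyshev's blow-up at the small inverse integers. Your version is more explicit than the paper's sketch---in particular, your clean verification that $\prod_{j=1}^{K}\frac{2w(k-j)}{k(2w-j)}\le 1$ nails down the boundedness at all $1/k$ that the paper leaves implicit.
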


\begin{proof}
Assuming for simplicity that $w$ is a perfect cube, consider%
\[
u\left(  x\right)  :=\left(  1-x\right)  \left(  1-2x\right)  \cdots\left(
1-w^{1/3}x\right)  .
\]
Notice that $\deg\left(  u\right)  =w^{1/3}$\ and $u\left(  \frac{1}%
{k}\right)  =0$ for all $k\in\left[  w^{1/3}\right]  $. \ Furthermore, we have
$\left\vert u\left(  x\right)  \right\vert \leq1$\ for all $x\in\left[
0,\frac{1}{w^{1/3}}\right]  $, and also $u\left(  x\right)  \in\left[
1-O\left(  \frac{1}{w^{1/3}}\right)  ,1\right]  $\ for all $x\in\left[
0,\frac{1}{w}\right]  $. \ Now, let $v$\ be the Chebyshev polynomial of degree
$w^{1/3}$, affinely adjusted so that $\left\vert v\left(  x\right)
\right\vert \leq1$\ for all $x\in\left[  0,\frac{1}{w^{1/3}}\right]  $ rather
than all $\left\vert x\right\vert \leq1$, and with a large jump between
$\frac{1}{2w}$\ and $\frac{1}{w}$. \ Then the product, $p\left(  x\right)
:=u\left(  x\right)  v\left(  x\right)  $, has degree $2w^{1/3}$\ and
satisfies all the requirements.
\end{proof}

It seems plausible that, by using Lemma \ref{fedjalem}, we could give a modest
improvement to Theorem \ref{main}, which would involve $\Omega\left(
w^{1/3}\right)  $ rather than $\Omega\left(  w^{1/4}\right)  $.
\ Unfortunately, there are technical difficulties in doing so, since relaxing
the assumption\ $\deg\left(  v\right)  =o\left(  w^{1/4}\right)  $\ to
$\deg\left(  v\right)  =o\left(  w^{1/3}\right)  $ breaks several steps in the
proof simultaneously. \ We leave the details to future work.

In any case, Lemma \ref{fedjatight}\ presumably means that, to prove a lower
bound involving $\Omega\left(  \sqrt{w}\right)  $, one would need to go beyond
the polynomial method. \ (We say \textquotedblleft
presumably\textquotedblright\ because we can't rule out the possibility of
using the polynomial method in some way completely different from how it was
used in Theorem \ref{main}.)\newline

In the remainder of this section, we give what we think is a viable path to
going beyond the polynomial method. \ Specifically, we observe that our
problem---of lower-bounding the number of copies of $\left\vert S\right\rangle
$\ \textit{and} the number of queries to $\mathcal{O}_{S}$\ needed for
approximate counting of $S$---can be reduced to a pure problem of
lower-bounding the number of copies of $\left\vert S\right\rangle $. \ To do
so, we use a hybrid argument, closely analogous to an argument recently given
by Zhandry \cite{zhandry:lightning} in the context of quantum money.

Given a subset $S\subseteq\left[  L\right]  $, let $\left\vert S\right\rangle
$\ be a uniform superposition over $S$ elements. \ Then let%
\[
\rho_{L,w,k}:=\operatorname{E}_{S\subseteq\left[  L\right]  ~:~\left\vert
S\right\vert =w}\left[  \left(  \left\vert S\right\rangle \left\langle
S\right\vert \right)  ^{\otimes k}\right]
\]
be the mixed state obtained by first choosing $S$\ uniformly at random subject
to $\left\vert S\right\vert =w$, then taking $k$ copies of $\left\vert
S\right\rangle $. \ Given two mixed states $\rho$\ and $\sigma$, recall also
that the \textit{trace distance}, $\left\Vert \rho-\sigma\right\Vert
_{\operatorname*{tr}}$, is the maximum bias with which $\rho$\ can be
distinguished from $\sigma$\ by a single-shot measurement.

\begin{theorem}
\label{hybridthm}Let $2w\leq L\leq N$. \ Suppose $\left\Vert \rho_{L,w,k}%
-\rho_{L,2w,k}\right\Vert _{\operatorname*{tr}}\leq\frac{1}{10}$. \ Then any
quantum algorithm $Q$\ requires either $\Omega\left(  \sqrt{\frac{N}{L}%
}\right)  $\ queries to $\mathcal{O}_{S}$\ or else $\Omega\left(  k\right)
$\ copies of $\left\vert S\right\rangle $ to decide whether $\left\vert
S\right\vert =w$\ or $\left\vert S\right\vert =2w$ with success probability at
least $2/3$, promised that one of those is the case.
\end{theorem}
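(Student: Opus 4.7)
The plan is to prove the contrapositive: I assume $T = o(\sqrt{N/L})$ and $R < k$, and produce a single-shot POVM that distinguishes $\rho_{L,w,R}$ from $\rho_{L,2w,R}$ with bias strictly greater than $1/10$. Since $\rho_{L,\cdot,R}$ is a partial trace of $\rho_{L,\cdot,k}$, the data-processing inequality then gives $\|\rho_{L,w,k} - \rho_{L,2w,k}\|_{\mathrm{tr}} > 1/10$, contradicting the hypothesis. Because $L \geq 2w$, both promise cases are realized by subsets $S \subseteq [L]$, so on inputs of this form $Q$ must still succeed with bias at least $1/3$ when $S$ is drawn uniformly from the $w$- or $2w$-subsets of $[L]$.

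The heart of the argument is a BBBV-style hybrid over the $T$ oracle calls, which for such random $S \subseteq [L]$ replaces each $\mathcal{O}_S$ one at a time with the identity (``zero'') oracle. Each replacement changes the final density matrix, in trace distance, by an amount controlled by the weight of the query register on $S$ at the time of that query. Averaging over a uniformly random $S$ of size at most $L$ in $[L]$, I expect the per-query weight on $S$ to be $O(L/N)$, so summing over the $T$ hybrid steps (with Jensen) yields a total trace-distance error of $O(T\sqrt{L/N}) = o(1)$ by assumption on $T$. Once every oracle call has been replaced, $Q^{\mathbf{0}}(|S\rangle^{\otimes R})$ is a quantum operation on $R$ copies of $|S\rangle$ alone; it therefore induces a POVM $M$ with $|\mathrm{Tr}(M \rho_{L,w,R}) - \mathrm{Tr}(M \rho_{L,2w,R})| \geq 1/3 - o(1) > 1/10$, and the data-processing step completes the contradiction.

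The main obstacle lies in the hybrid step, because BBBV's usual bound on query weight is loose for $S$-dependent initial states: by swapping an input copy of $|S\rangle$ into the query register, the algorithm can arrange for full weight on $S$ at query time, yet $\mathcal{O}_S$ acts on $|S\rangle$ merely by a global sign and so leaves the density matrix unchanged. What I will therefore need is a refinement of BBBV that, at each hybrid step, decomposes the query register into an $|S\rangle$-aligned component (absorbed into a global phase and discarded from the trace-distance accounting) and an $|S\rangle$-orthogonal component (to which the standard BBBV estimate applies). This is precisely the hybrid structure used by Zhandry~\cite{zhandry:lightning} in the quantum money setting. Once that decomposition is set up, the orthogonal-weight bound $O(L/N)$ in expectation over $S$ follows from elementary counting on random $w$- or $2w$-subsets of $[L]$, and the rest of the argument is routine.
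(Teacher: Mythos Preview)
Your overall architecture---contrapositive, hybrid over the oracle calls, then data-processing on the copies---is exactly right, but the specific hybrid you set up has a real gap, and the paper repairs it by a different device than the $|S\rangle$-decomposition you propose. The problem is the claim that the (orthogonal) query weight on $S$ is $O(L/N)$ in expectation over random $S\subseteq[L]$. You cannot get this by ``elementary counting'': the algorithm's pre-query state depends on $S$ through the $|S\rangle^{\otimes R}$ copies, so averaging over $S$ as if the query amplitudes were $S$-independent is illegitimate. Your proposed fix---peel off the $|S\rangle$-aligned piece as a harmless phase---does not close the hole. First, the sign flip on the $|S\rangle$-aligned component is \emph{not} a global phase once an orthogonal component is present; the cross terms in the density matrix change, so it cannot simply be ``discarded from the trace-distance accounting.'' Second, even after removing the $|S\rangle$ direction, the remaining $|S\rangle$-orthogonal part of the query register can still be supported entirely on $\mathrm{span}\{|i\rangle:i\in S\}$, and nothing in your outline bounds that residual weight by $O(L/N)$ (indeed, even for an $S$-independent query state concentrated on $[L]$ the naive average is $|S|/L$, not $L/N$).

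The paper sidesteps all of this by choosing a different intermediate oracle. Instead of degrading $\mathcal{O}_S$ to the zero oracle, it first fixes $S$ and then draws a uniformly random $U$ with $S\subseteq U\subseteq[N]$ and $|U|=L$, and compares $\mathcal{O}_S$ to $\mathcal{O}_U$. Because $S\subseteq U$, the two oracles \emph{agree on all of $S$}, so the copies of $|S\rangle$ are completely useless for detecting the swap; the only disagreement is on $U\setminus S$, a random $(L-|S|)$-subset of $[N]\setminus S$ that is independent of the algorithm's state once $S$ is fixed. Now the standard BBBV averaging is valid and gives the $\Omega\!\left(\sqrt{N/L}\right)$ query bound directly, with no need for any $|S\rangle$-decomposition. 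Finally one reverses the order of sampling: fix $U$ first and draw $S$ uniformly inside $U$; then $\mathcal{O}_U$ is a fixed, $S$-independent oracle, and the algorithm's behavior depends on $S$ only through $|S\rangle^{\otimes R}$, yielding the distinguisher for $\rho_{L,w,R}$ versus $\rho_{L,2w,R}$ to which your data-processing step applies.
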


\begin{proof}
Choose a subset $S\subseteq\left[  N\right]  $ uniformly at random, subject to
$\left\vert S\right\vert =w$\ or $\left\vert S\right\vert =2w$, and consider
$S$ to be fixed. \ Then suppose we choose $U\subseteq\left[  N\right]
$\ uniformly at random, subject to both $\left\vert U\right\vert =L$\ and
$S\subseteq U$. \ Consider the hybrid in which $Q$ is still given $R$\ copies
of the state $\left\vert S\right\rangle $, but now gets oracle access to
$\mathcal{O}_{U}$\ rather than $\mathcal{O}_{S}$. \ Then so long as $Q$ makes
$o\left(  \sqrt{\frac{N}{L}}\right)  $\ queries to its oracle, we claim that
$Q$ cannot distinguish this hybrid from the \textquotedblleft
true\textquotedblright\ situation (i.e., the one where $Q$\ queries
$\mathcal{O}_{S}$) with $\Omega\left(  1\right)  $\ bias. \ This claim follows
almost immediately from the BBBV Theorem \cite{bbbv}. \ In effect, $Q$ is
searching the set $\left[  N\right]  \setminus S$ for any elements of
$U\setminus S$\ (the \textquotedblleft marked items,\textquotedblright\ in
this context), of which there are $L-\left\vert S\right\vert $ scattered
uniformly at random. \ In such a case, we know that $\Omega\left(  \sqrt
{\frac{N-\left\vert S\right\vert }{L-\left\vert S\right\vert }}\right)
=\Omega\left(  \sqrt{\frac{N}{L}}\right)  $\ quantum queries are needed to
detect the marked items with constant bias.

Next suppose we first choose $U\subseteq\left[  N\right]  $ uniformly at
random, subject to $\left\vert U\right\vert =L$, and consider $U$ to be fixed.
\ We then choose $S\subseteq U$\ uniformly at random, subject to $\left\vert
S\right\vert =w$\ or $\left\vert S\right\vert =2w$. \ Note that this produces
a distribution over $\left(  S,U\right)  $\ pairs identical to the
distribution that we had above. \ In this case, however, since $U$ is fixed,
queries to $\mathcal{O}_{U}$\ are no longer relevant. \ The only way to decide
whether\ $\left\vert S\right\vert =w$\ or $\left\vert S\right\vert =2w$\ is by
using our copies of $\left\vert S\right\rangle $---of which, by assumption, we
need $\Omega\left(  k\right)  $\ to succeed with constant bias, even after
having fixed $U$.
\end{proof}

One might think that Theorem \ref{hybridthm} would lead to immediate
improvements to our lower bound. \ In practice, however, the best lower bounds
that we currently have, even purely on the number of copies of $\left\vert
S\right\rangle $, come from the Laurent polynomial method (Theorem
\ref{main})! \ Having said that, we are optimistic that one could obtain a
lower bound that beat Theorem \ref{main}\ at least when $w$\ is small, by
combining Theorem \ref{hybridthm} with a brute-force computation of trace distance.

\section{Discussion and Open Problems\label{OPEN}}

In Theorem \ref{main},\ can the bound $\min\left\{  w^{1/4},\sqrt{\frac{N}{w}%
}\right\}  $ be tightened to $\min\left\{  \sqrt{w},\frac{N}{w}\right\}  $,
matching the upper bounds that come from the birthday paradox and projective
measurements? \ If so, then how far can one go toward proving this using the
(Laurent) polynomial method,\ and where does one start to need new techniques?

Also, suppose our task was to distinguish the case $\left\vert S\right\vert
=w$\ from the case $\left\vert S\right\vert =\left(  1+\varepsilon\right)  w$,
rather than merely $w$\ from $2w$. \ Then what is the optimal dependence on
$\varepsilon$? \ As we said in Section \ref{INTRO}, it's known that $O\left(
\frac{1}{\varepsilon}\sqrt{\frac{N}{w}}\right)  $\ quantum queries to
$\mathcal{O}_{S}$\ suffice to solve this problem. \ One can also show without
too much difficulty that%
\[
O\left(  \min\left\{  \frac{\sqrt{w}}{\varepsilon},\frac{N}{\varepsilon^{2}%
w}\right\}  \right)
\]
copies of $\left\vert S\right\rangle $\ suffice. \ On the lower bound side,
what generalizations of Theorem \ref{main}\ can we prove that incorporate
$\varepsilon$? \ We note that our current argument doesn't
automatically\ generalize; one would need to modify something to continue
getting growth in the polynomials $u$\ and $v$\ after the first iteration.

Is there any interesting real-world example of a class of sets for which
QSampling and membership testing are both efficient, but approximate counting
is not? \ (I.e., the behavior that this paper showed can occur in the
black-box setting?)

Finally, our favorite open problem in this area: can we show that there's no
black-box $\mathsf{QMA}$\ protocol for approximate counting? \ In other words:
that there's no $\left(  \log N\right)  ^{O\left(  1\right)  }$-qubit quantum
state that Merlin can send to Arthur, so that Arthur becomes convinced after
$\left(  \log N\right)  ^{O\left(  1\right)  }$\ queries that $\left\vert
S\right\vert $\ is $2w$\ rather than $w$\ (promised that one of those is the
case)? \ Arthur's task is \textquotedblleft easier\textquotedblright\ than the
task considered in this paper, in that Merlin can send him an arbitrary
witness state $\left\vert \psi_{S}\right\rangle $, rather than just the
specific state $\left\vert S\right\rangle $; but also \textquotedblleft
harder,\textquotedblright\ in that Merlin can cheat and send the wrong state.
\ We thus obtain a problem that's formally incomparable to the one solved
here, yet which seems very closely related.

Ruling out a black-box $\mathsf{QMA}$\ protocol for approximate counting is
equivalent to asking for an oracle relative to which $\mathsf{SBP}%
\not \subset \mathsf{QMA}$, where $\mathsf{SBP}$ (Small Bounded-Error
Polynomial-Time), defined by B\"{o}hler et al. \cite{bohler}, is the
complexity class that captures the power of approximate counting. \ We note
that $\mathsf{MA}\subseteq\mathsf{SBP}\subseteq\mathsf{AM}$, and that an
oracle relative to which $\mathsf{AM}\not \subset \mathsf{QMA}$\ already
follows from the work of Vereshchagin \cite{vereshchagin}. \ We also note
that, under strong derandomization assumptions, we'd have $\mathsf{NP}%
=\mathsf{MA}=\mathsf{SBP}=\mathsf{AM}$, and hence $\mathsf{SBP}\subseteq
\mathsf{QMA}$\ in the unrelativized world.

\section{Acknowledgments}

I'm grateful to Paul Burchard\ for suggesting the problem to me, and to
\textquotedblleft fedja\textquotedblright\ for letting me include Lemmas
\ref{fedjalem}\ and \ref{fedjatight}.

\bibliographystyle{plain}
\bibliography{thesis}

\end{document}